\newtheorem{definition}{Definition}
\newtheorem{postulate}{Postulate}
\newtheorem{theorem}{Theorem}
\newtheorem{corollary}{Corollary}
\newtheorem{remark}{Remark}
\newenvironment{proof}[1][Proof]{\noindent\textbf{#1.} }{\ \rule{0.5em}{0.5em}}
\begin{document}
\title{The Role of Second Law of Thermodynamics  in Continuum Physics: A Muschik and Ehrentraut Theorem Revisited}
\author{V. A. Cimmelli}
    \email{vito.cimmelli@unibas.it}
    \affiliation{Department of Mathematics, Computer Science and Economics, University of Basilicata, Viale dell'Ateneo Lucano, 10, 85100, Potenza, Italy}
\author{P. Rogolino}
    \email{progolino@unime.it}
    \affiliation{Department of Mathematical and Computer Sciences, Physical Sciences and Earth Sciences, University of Messina, Viale F. Stagno d'Alcontres, 31, 98166, Messina, Italy}

\bigskip

\bigskip

\begin{abstract}
Second law of thermodynamics imposes that in any thermodynamic process the entropy production must be nonnegative. In continuum physics such a requirement is fulfilled by postulating the constitutive equations  which represent the material properties of the bodies in such a way that second law of thermodynamics is satisfied in arbitrary processes. Such an approach, first assumed in some pioneering papers by Coleman and Noll \cite{ColNol} and Coleman and Mizel \cite{ColMiz}, in practice regards second law of thermodynamics as a restriction on the constitutive equations, which must guarantee that any solution of the balance laws satisfies also the entropy inequality. As observed by Muschik and Ehrentraut \cite{MusEhr}, this is a useful operative assumption, but not a consequence of general physical laws. Indeed, a different point of view, which regards second law of thermodynamics as a restriction on the thermodynamic processes, i.e., on the solutions of the system of balance laws, is possible. This is tantamount to assume that there are solutions of the balance laws which satisfy the entropy inequality, and solutions which do not satisfy it. In order to decide what is the correct approach, Muschik and Erhentraut postulated an amendment to the second law, which makes explicit the evident but rather hidden assumption that in any point of the body the entropy production is zero if, and only if, this point is thermodynamic equilibrium. Then they proved that, given the amendment,  second law of thermodynamics is necessarily a restriction on the constitutive equations and not on the thermodynamic processes. In the present paper we revisit their proof, lighting up some geometric aspects which were hidden in Ref. \cite{MusEhr}. Moreover, we propose an alternative formulation of second law of thermodynamics which incorporates the amendment. In this way we make this important result more intuitive and easily accessible to a wide audience.
\end{abstract}

\keywords{second law of thermodynamics; dissipation principle; state space; balance laws; entropy inequality}

\bibliographystyle{unsrt}

\maketitle

\section{Introduction}\label{1}

\medskip

Let's $B$ a continuous body undergoing a thermomechanical transformation, whose evolution in the spacetime is ruled by the system of balance laws
  \begin{equation} 
  \label{6}
   U_{\beta,t} +  U_{\beta,j}v_{j}+\Phi^\beta_{k,k}=r_\beta,\,\,\,\,\, \beta= 1\dots \omega,
  \end{equation} 

 with $v_{j}$ as the components of the velocity field  on $B$ entering the total time derivative, $\Phi^\beta_k$  as the components of the flux of $U_\beta$, and $r_\beta$ as the production of  $U_\beta$ (for the sake of simplicity we assume that the supplies are zero).  Moreover, the symbols $f_{,t}$ and $f_{,j}$ mean the partial derivative of function $f$ with respect to time and to the spatial coordinate $x_{j},\,\,j=1,2,3$, respectively.
We suppose that  the fields $U_{\beta}$, the fluxes $\Phi^\beta_k$, and the productions  $r_\beta$ depend on $\omega$ unknown fields $z_\alpha(x_{j},\,t)$ and on their spatial derivatives $z_{\alpha,j}(x_{j},\,t)$. Then, suitable constitutive equations must be assigned for them.

In classical rational thermodynamics \cite{TRU,ColNol} the equations above are the balances of mass, linear momentum, angular momentum and energy, while in the extended non-equilibrium thermodynamic theories taking the fluxes as independent variables, the set of field equations  includes the balance laws for the independent fluxes,  \cite{Gra,JCL,MR, SELCIMJOU,SzuKovSim}.  

The solutions the system  \eqref{6} must obey second law of thermodynamics, which imposes that the local entropy production
\begin{equation}\label{5}
 \sigma^{\left(s\right)}=\rho s_{,t}+\rho s_{,j}v_{j}+J_{k,k} - \rho({{r}}/{\vartheta}),
\end{equation}
where $s$ is the specific entropy,  $J_k$ are  the components of the entropy flux, and $\vartheta$ the absolute temperature, is nonnegative whatever the thermodynamic process is \cite{ColNol,TRU}. 

In continuum physics the entropy (or dissipation) principle \cite{CimJouRugVan} constitutes a valuable tool in modeling material properties. Coleman and Noll were the first to formulate it as follows \cite{ColNol}: 

\textit{The constitutive equations, which characterize the material properties of continuous media, must be assigned in such a way that second law of thermodynamics is satisfied along arbitrary thermodynamic processes}. 

\medskip

These authors also proposed a rigorous mathematical procedure to exploit the requirement above, currently referred to as Coleman-Noll procedure  \cite{ColNol,TriPapCimMus}.   

It is worth observing that   the entropy principle, such as formulated by Coleman and Noll, is just an operative assumption but not a consequence of a general physical law.  Thus, in principle, nothing prevents to assume that second law of thermodynamics  restricts the thermodynamic  processes instead of the constitutive equations, by selecting those which actually can occur in nature, and those which cannot occur. 
In order to decide what is the correct approach, Muschik and Ehrentraut \cite{MusEhr} proposed the following amendment to the second law: 

\medskip

\textit{Except in equilibria, reversible process directions in state space do not exist}. 

\medskip

From the physical point of view the amendment expresses, in form of postulate, the physically evident but rather hidden assumption that in any point of a continuum  body the entropy production is zero if, and only if, this point is in thermodynamic equilibrium.
Muschik and Ehrentraut  proved that, under the validity of the amendment, second law of thermodynamics necessarily restricts the constitutive equations and not the thermodynamic processes. In this way, the classical Coleman--Noll approach follows by a rigorous proof.  

The present paper is motivated by  the observation that the important result illustrated above can be put in more general and accessible form within a geometric framework. 

To achieve that task, we use the results in Refs. \cite{DolFraRog,DolFraRog1}, where a geometric perspective on nonequilibrium thermodynamics has been given. The chosen state space is different with respect to that considered Ref. \cite{MusEhr}, because we do not include in it the time derivatives. In this way, the constitutive equations we are dealing with, are suitable to respect the principle of material indifference, too   \cite{TRU}. After defining the space of the higher derivatives, we introduce the definitions of real, ideal,  and over-ideal vector of the higher derivatives. For thermodynamic processes, we give the definitions of irreversible, reversible, and over-reversible process, by analyzing the properties of its representative  curve in the fibre bundle of the configuration spaces.

Once the geometric framework is complete, we reformulate second law of thermodynamics, both locally and globally in time, in order to encompass  the amendment. In this way, we are able to prove a new formulation of the Muschik and Ehrentraut theorem.
The paper runs as follows.

In Sec. \ref{2}, we construct a new thermodynamic framework for nonequilibrium processes. In Sec. \ref{3}, we give a new formulation, both locally and globally, of second law of thermodynamics.  In Sec. \ref{4}, we prove the Muschik and Ehrentraut theorem. In Sec. \ref{5}, we resume our results and discuss some open problems which will be considered in future researches.

\section{The thermodynamic framework} \label{2}

\medskip

In this section we aim at constructing the geometric framework where our main result can be formulated. To this end, we start by giving some basic definitions.

\begin{definition} \label{d2}
The space $C_t$ of the configurations at the instant $t$ is represented by
a $\omega$-dimensional vector space  spanned by the solutions $z_\alpha(x_{j},\,t)$ of Eqs. \eqref{6} with a  the structure of a finite-dimensional manifold.
\end{definition}

We assume that the total configuration space is given by the disjoint union  
\begin{equation} 
 \label{7}
\mathcal{C} = \bigcup_{t\in [0,\,\infty]}\{t\} \times C_t,
\end{equation} 
with a given natural structure of a fibre bundle over the real line $\mathbb{R}$ where time flows \cite{DolFraRog,DolFraRog1}. 
\begin{definition} \label{d2}
$\mathcal{C}$ is called  configuration bundle.
\end{definition}
 Under the natural  assumption that  $C_t$ 
does not vary in time, namely,  $C_t=C\, \forall t$, then $\mathcal{C}$ has the topology of the
Cartesian product
\begin{equation} 
 \label{8}
\mathcal{C} =  \mathbb{R} \times C.
\end{equation}
\begin{definition} \label{d3}
A vector valued function $\pi: t\in [\tau_0,\,\tau_0+\tau] \subseteq\mathbb{R}\rightarrow z_\alpha(x_{j},\,t)  \in\mathcal{C}$ is said a thermodynamic process $\pi$ of duration $\tau$. Moreover,  $\pi=\pi(t)$ is  the parametric equation of the curve $\Gamma$ representative of $\pi$  in $\mathcal{C}$. 
\end{definition}
\begin{definition} \label{d31}
For $t_0 \in [\tau_0,\,\tau_0+\tau] $, a vector valued function $p: t\in [t_0,\,\tau_0+\tau] \subseteq\mathbb{R}\rightarrow z_\alpha(x_{j},\,t)  \in\mathcal{C}$ is said a restricted thermodynamic process $p$ of initial point $t_0$ and duration $\tau_0+\tau-t_0$, \cite{DolFraRog}. Moreover,  $p=p(t)$ is  the parametric equation of the curve $\gamma$ representative of $p$  in $\mathcal{C}$. 
\end{definition}

\begin{remark} \label{r0}
For $t_0= \tau_0$ we get $p(t)= \pi(t)$, for  $t_0= \tau_0+\tau$, $p(t)$ is the process of duration $0$, i.e., the null process.
\end{remark}

As said in Sec.  \ref{1}, in order to find the fields $z_\alpha(x_{j},\,t)$, i.e. to solve the system \eqref{6}, 
for  the quantities $U_\beta$, $\Phi^\alpha_k$ and $r_\alpha$  constitutive equations must be assigned on a suitable state space. 

\begin{definition} \label{d4}
The $4\omega$-dimensional vector space  with  the structure of a finite-dimensional manifold
\begin{equation}  
  \label{9}
   \Sigma_t = \left\{z_{\alpha}(x_{j},\,t),\,z_{\alpha,j}(x_{j},\,t)\right\}.
\end{equation} 
for any value of the time variable $t$, represents a local in time state space and it is called state space  at the instant $t$.
\end{definition}
\begin{definition} \label{d5}
The disjoint union
\begin{equation} 
 \label{10}
\mathcal{S} = \bigcup_{t\in [0,\,\infty]}\{t\} \times \Sigma_t,
\end{equation} 
with a given natural structure of a fibre bundle over the real line $\mathbb{R}$ where time flows,  represents 
the total configuration space and it is said the thermodynamic bundle.
\end{definition}
 Again, under the natural  assumption   that  $\Sigma_t$ 
does not vary in time, namely,  $\Sigma_t=\Sigma\, \forall t$, then $\mathcal{S}$ has the topology of the
Cartesian product
\begin{equation} 
 \label{11}
\mathcal{S} =  \mathbb{R} \times \Sigma.
\end{equation}
Of course, 
\begin{equation} 
 \label{12}
{C_t}  \subset \Sigma_t,\,\,\,\,\,\,\mathcal{C}\subset\mathcal{S}.
\end{equation}
The balance equations  \eqref{6} on the local in time state space $\Sigma_t$ read
 \begin{equation} 
 \label{15}
  \frac{\partial  U_{\beta}
}{\partial z_\alpha} z_{\alpha,t}+\frac{\partial  U_{\beta}
}{\partial z_{\alpha,j}} z_{\alpha,jt} + \frac{\partial  U_{\beta}
}{\partial z_\alpha} z_{\alpha,j}v_j +  \frac{\partial  U_{\beta}
}{\partial z_{\alpha,k}} z_{\alpha,kj}v_j  + \frac{\partial \Phi^\beta _k}{\partial z_\alpha}z_{\alpha,k} + \frac{\partial \Phi^\beta _k}{\partial z_{\alpha,j}}z_{\alpha,jk}
             = r_\beta.
 \end{equation} 
  In Eqs.  \eqref{15} and we may individuate the $10\omega$ higher derivatives $\left\{z_{\alpha,t},\, z_{\alpha,jt},\, z_{\alpha,jk} \right\}$, which  are the space and time derivatives of the elements of $\Sigma_t$. 
 
 \begin{definition} \label{d6}
The local in time $10\omega$-dimensional vector space 
\begin{equation}  
  \label{17}
   H_t = \left\{z_{\alpha,t}(x_{j},\,t),\,  z_{\alpha,jt}(x_{j},\,t),\,z_{\alpha,jk}(x_{j},\,t)\right\}, 
\end{equation} 
and the fibre bundle
\begin{equation} 
 \label{18}
\mathcal{H} = \bigcup_{t\in [0,\,\infty]}\{t\} \times H_t.
\end{equation}
represent the space of the higher derivatives at time $t$ and its
 fibre bundle respectively.
Moreover, the equilibrium subspace of $H_t$ and its fibre bundle are given by 
\begin{equation}  
  \label{17a}
   \hat E_t = \left\{z_{\alpha,jk}(x_{j},\,t)\right\}, 
\end{equation} 
and 
\begin{equation} 
 \label{18a}
\mathcal{\hat E} = \bigcup_{t\in [0,\,\infty]}\{t\} \times \hat E_t.
\end{equation} 
\end{definition}
  Analogously, the entropy inequality on the state space reads
\begin{equation} 
 \label{16}
  \rho\frac{\partial s}{\partial z_\alpha} z_{\alpha,t}+\rho\frac{\partial s}{\partial z_{\alpha,j}} z_{\alpha,jt} +  \rho\frac{\partial s}{\partial z_\alpha} z_{\alpha,j}v_j +  \rho\frac{\partial s}{\partial z_{\alpha,k}} z_{\alpha,kj}v_j 
 +\frac{\partial J_k}{\partial z_\alpha}z_{\alpha,k} +\frac{\partial J_k}{\partial z_{\alpha,j}}z_{\alpha,jk}  \geq0.
\end{equation} 

\begin{definition} \label{d7}
The local in time $10\omega$-dimensional vector space at time $t$ 
\begin{equation}  
  \label{17a}
   W_t = \left\{z_{\alpha,t}(x_{j},\,t),\,  z_{\alpha,jt}(x_{j},\,t),\,z_{\alpha,jk}(x_{j},\,t)\right\}, 
\end{equation} 
and the fibre bundle
\begin{equation} 
 \label{18}
\mathcal{W} = \bigcup_{t\in [0,\,\infty]}\{t\} \times W_t.
\end{equation} 
define the vector space and the fibre bundle of the higher derivatives, respectively, whose state vectors satisfy  the entropy inequality. Moreover, 
the equilibrium subspace of $W_t$ and its fibre bundle are given by 
\begin{equation}  
  \label{17a}
   E_t = \left\{z_{\alpha,jk}(x_{j},\,t)\right\}, 
\end{equation} 
and 
\begin{equation} 
 \label{18a}
\mathcal{E} = \bigcup_{t\in [0,\,\infty]}\{t\} \times  E_t.
\end{equation} 
\end{definition}
\medskip

\begin{remark}  \label{r1}
The reason because we defined two different spaces of the higher derivatives, one for the balance equations and another one for the entropy inequality, is related to the fundamental focus of the present investigation, namely, to determine the conditions, if any, under which all the solutions of the balance laws are also solutions of the entropy inequality. This will be discussed in detail in next section.
\end{remark} 

The relations in  Eqs.  \eqref{15} and \eqref{16} can be arranged as follows
  \begin{equation} 
 \label{19}
  \frac{\partial  U_{\beta}
}{\partial z_\alpha} z_{\alpha,t}+\frac{\partial  U_{\beta}
}{\partial z_{\alpha,j}} z_{\alpha,jt} +  \left(\frac{\partial  U_{\beta}
}{\partial z_{\alpha,k}} v_j  + \frac{\partial \Phi^\beta _j}{\partial z_{\alpha,k}}\right)z_{\alpha,kj} = r_\beta - \frac{\partial  U_{\beta}
}{\partial z_\alpha} z_{\alpha,j}v_j  -\frac{\partial \Phi^\beta _j}{\partial z_\alpha}z_{\alpha,j}.
\end{equation} 
\begin{equation} 
 \label{20}
  \rho\frac{\partial s}{\partial z_\alpha} z_{\alpha,t}+\rho\frac{\partial s}{\partial z_{\alpha,j}} z_{\alpha,jt} + \left(\rho\frac{\partial s}{\partial z_{\alpha,k}}v_i 
  +\frac{\partial J_i}{\partial z_{\alpha,k}} \right)z_{\alpha,ki} \geq -  \rho\frac{\partial s}{\partial z_\alpha} z_{\alpha,j}v_j -\frac{\partial J_i}{\partial z_\alpha}z_{\alpha,i}.
\end{equation} 
Let's now define the $10\omega \times 1$  column vector function
 \begin{equation} 
 \label{21}
y_{\alpha}  \equiv  \left( z_{\alpha,t},\,\, z_{\alpha,jt},\,\,z_{\alpha,kj}\right)^T, 
 \end{equation}
the $\omega \times 1$ column vector 
  \begin{equation} 
 \label{22}
 C_\beta \equiv  r_\beta - \frac{\partial  U_{\beta}
}{\partial z_\alpha} z_{\alpha,j}v_j  -\frac{\partial \Phi^\beta _j}{\partial z_\alpha}z_{\alpha,j},\,\,\,\, \beta =1 \dots \omega,
 \end{equation}
and the $\omega \times 10\omega$ matrix
 \begin{equation} 
 \label{23}
 A_{\beta\alpha} \equiv \left[ \frac{\partial  U_{\beta}
}{\partial z_\alpha},\,\,\,\,\frac{\partial  U_{\beta}}{\partial z_{\alpha,j}},\,\,\,\,\left(\frac{\partial  U_{\beta}
}{\partial z_{\alpha,k}} v_j  + \frac{\partial \Phi^\beta _j}{\partial z_{\alpha,k}}\right)\right]\,\,\,\,\,(j,\,k=1,2,3),
\end{equation}
with $C_\beta$ and  $A_{\beta\alpha}$ defined on $\mathcal{S}$. 
In this way, the balance equations \eqref{19} can be rearranged as
\begin{equation} 
 \label{24}
 A_{\beta\alpha}(\mathcal{S})y_{\alpha} = C_\beta(\mathcal{S}).
 \end{equation}
 Analogously, after defining the $10\omega \times 1$  column vector function
\begin{equation} 
 \label{25}
 B_{\alpha} (\mathcal{S}) \equiv \left(\rho\frac{\partial s}{\partial z_{\alpha}},\,\,\,\, \rho\frac{\partial s}{\partial z_{\alpha,j}},\,\,\,\,  \left(\rho\frac{\partial s}{\partial z_{\alpha,k}}v_i 
  +\frac{\partial J_i}{\partial z_{\alpha,k}} \right)\right)^T, 
\end{equation} 
and the scalar function
\begin{equation} 
 \label{26}
D(\mathcal{S}) \equiv -  \rho\frac{\partial s}{\partial z_\alpha} z_{\alpha,j}v_j -\frac{\partial J_i}{\partial z_\alpha}z_{\alpha,i},
\end{equation} 
we can write the inequality \eqref{20} as
\begin{equation} 
 \label{27}
 B_{\alpha}(\mathcal{S})y_{\alpha} \ge D(\mathcal{S}).
 \end{equation}

\begin{remark}  \label{r2}
It is worth observing that the higher derivatives entering the system  \eqref{24} are elements of $H_t$, while those entering the inequality \eqref{27} are elements of $W_t$.
\end{remark} 

From now on we pursue our analysis under the hypothesis that $B$ occupies the whole space.  Then, for arbitrary $t_0 \in [\tau_0, \tau_0+\tau]$ we  consider the restricted process $p$ of initial instant $t_0$ and duration $\tau_0+\tau-t_0$, and suppose that it corresponds to the solution of the Cauchy problem  for the system \eqref{24} with  initial conditions
\begin{equation} 
 \label{28}
 z_{\alpha}(x_{j},\,t_0) =  z_{\alpha\,0}({x}_{j}),\,\,\,\, \forall {P} \in C.
 \end{equation}
 If $A_{\beta\alpha}$ and $C_\beta$ are regular, and $A_{\beta\alpha}$ is invertible,  the theorem of Cauchy-Kovalevskaya ensures that  the Cauchy problem   \eqref{24} and \eqref{28} has a unique solution continuously depending on the initial data \eqref{28}, \cite{CH}. However, such a solution does not necessarily corresponds to a thermodynamic process which is physically realizable, since the physically admissible solutions of  
 \eqref{24} and \eqref{28} are only those solutions which additionally satisfy the unilateral differential constraint \eqref{27}. On the other hand, the problem \eqref{24} and \eqref{28} is very difficult to solve, in general, so that to find a solution of it and verify ex post if it also satisfies \eqref{27} does not seems to be a convenient procedure. For that reason, Coleman and Noll \cite{ColNol} in 1963 postulated the  constitutive principle referred in Sec. \ref{1},   \cite{CimJouRugVan}.  Then it is important to investigate if  the Coleman and Noll postulate is a consequence of a general physical law or it is an arbitrary, although very useful, assumption, as observed by Muschik and Ehrentraut \cite{MusEhr}. Such a study will be carried on in the next sections.
 
\section{Local and global formulation of second law of thermodynamics} \label{3}

\medskip

Let's consider now a fixed point $P_0 \in B$ whose vector position will be indicated by $\mathbf{x}_0$, a fixed instant of time $t_0 \in [\tau_0, \, \tau_0+\tau]$.  We note that, whatever is $t_0$,  it can  can ever be considered as the initial time of a restricted process of duration $\tau_0+\tau-t_0$. Moreover, let $\Sigma_0$,  $H_0$, and $E_0$ the vector spaces 
 $\Sigma_t(P_0,\,t_0)$, $H_t(P_0,\,t_0)$, and $E_t(P_0,\,t_0)$. When evaluated in $(P_0,\,t_0)$, the balance equations \eqref{24} and the entropy inequality \eqref{27} transform in the algebraic relations
 \begin{equation}
 \label{29}
 A_{\beta\alpha}({\Sigma_0})y_{\alpha} = C_\beta({\Sigma_0}),
 \end{equation}
 \begin{equation} 
 \label{30}
 B_{\alpha}({\Sigma_0})y_{\alpha} \ge D({\Sigma_0}).
 \end{equation}
In this way we can regard the $\omega \times 10\omega$ matrix $A_{\beta\alpha}({\Sigma_0})$ as a linear morphism from  $H_0$ to the $\omega$-dimensional Euclidean vector space defined on $\Sigma_0$. Analogously, the vector $B_{\alpha}({\Sigma_0})$ can be regarded as a linear application from $H_0$ in $\mathbb{R}$, so that $B_{\alpha}({\Sigma_0})$ belongs to the dual space $H^ {*}_0$ of  $H_0$. It is worth observing that, since $A_{\beta\alpha}$ has been supposed to be invertible (otherwise the Cauchy problem  \eqref{24} and \eqref{28} would not admit a unique solution), the algebraic relations  \eqref{29} allow to determine $\omega$ of the $10\omega$ components of $y_{\alpha}$.  Moreover, by spatial derivation  of the initial conditions  \eqref{28} we get
\begin{equation} 
 \label{31}
  z_{\alpha,jk}(x_{j},\,t_0) =  z_{\alpha\,0,jk}({x_j}), 
 \end{equation}
which, once evaluated in $P_0$, allow to determine  $6\omega$ components of $y_{\alpha}$. It is worth observing that, since the initial conditions can be assigned arbitrarily, such $6\omega$ quantities can assume arbitrary values. Moreover, there are further $3\omega$ components of the vector $y_{\alpha}$ which remain completely arbitrary, since the system \eqref{29}   and the initial relations  \eqref{31} allow to determine only  $7\omega$ of the  $10\omega$ components of $y_{\alpha}$.  Then, it is not guaranteed that the inequality  \eqref{30} is satisfied whatever is $y_{\alpha} \in H_0$.  Thus, we define the space $W_0 \subseteq H_0$ constituted by the vectors of $H_0$ which satisfy both Eq.  \eqref{29} and the inequality  \eqref{30}.

\begin{remark}  \label{r3}
It is worth observing that, although it is not guaranteed that the inequality  \eqref{30} is satisfied whatever is $y_{\alpha} \in H_0$, at this stage we do not have elements to exclude such a possibility. In other words, we do not have elements to decide if, actually,  $W_0$ is a proper subspace of $H_0$ or it coincides with $H_0$.
\end{remark} 

In order to decide if $W_0 \subset H_0$, or $W_0=H_0$, we follow the way paved by Muschik and Ehrentraut \cite{MusEhr} who observed that such a decision cannot ensue by the sole second law of thermodynamics, because such a law does not contain information neither regarding Eqs.  \eqref{29}, nor regarding the initial conditions \eqref{31}. In order to fill this gap, Muschik and Ehrentraut completed the information contained into the inequality \eqref{30} by an amendment which clarifies how the reversible transformations can be realized from the operative point of view. Here we follow their strategy, but propose a more general approach which includes the amendment into a new formulation of second law. To achieve that task, we need some preliminary definitions. To this end, we observe that in the real world reversible thermodynamic transformations do not exist, but they are approximated by very slow (quasi-static) transformations in which in any point $P_0 \in B$ the system is very close to the thermodynamic equilibrium. From a ideal point of view, a quasi-static transformation requires an infinite time to occur, and in any point of the system the value of the state  variable is constant in time. 

\begin{remark}  \label{r4}
Alternative formulations of the thermodynamic laws which consider realistic transformations occurring in a finite time have been proposed within the framework of finite time thermodynamics  \cite{AndBerNitSal,AndSalBer,Hof}.
\end{remark} 

As far as the thermodynamic framework developed so far is concerned,  if $B$ undergoes a quasi-static transformation, along with Muschik and Ehrentraut \cite{MusEhr}, we say that in any point $(P_0,\,t_0)$ the vectors of the higher derivatives are elements of $E_0$. Such an observation suggests the following definitions.

\begin{definition} \label{d8}
 A vector $y_{\alpha} \in H_0$ is said:

\medskip

\begin{itemize} 
\item real, if it satisfies the relation $ B_{\alpha}({\Sigma_0})y_{\alpha} > D({\Sigma_0})$;
\smallskip
\item ideal, if it satisfies the relation $ B_{\alpha}({\Sigma_0})y_{\alpha} = D({\Sigma_0})$;
\smallskip
\item over-ideal, if it satisfies the relation $ B_{\alpha}({\Sigma_0})y_{\alpha} < D({\Sigma_0})$.
\end{itemize} 
\end{definition}
\medskip
  
Owing to the definitions above we can establish the following 

\begin{postulate}{\bf Local formulation of second law of thermodynamics}. \label{p1}
Let $B$ a body, and let the couple $(P_0,\,t_0)$ represent an arbitrary point of $B$ at an arbitrary instant $t_0 \in [\tau_0,\, \tau_0+\tau]$. Suppose $B$ is  undergoing an arbitrary  thermodynamic process of initial instant $t_0$ and duration $\tau_0+\tau - t_0$.   Then,  the local space of the higher derivatives $W_0$ does not contain over-ideal vectors.  Moreover, a vector $y_{\alpha} \in W_0$ is  ideal if, and only if,  $y_{\alpha} \in E_0$.
\end{postulate}
The postulate above traduces the experimental evidence that in a thermodynamic process the entropy production cannot be negative in any point $P_0$ of $B$ at any instant $t_0$. Moreover, it also expresses the further experimental fact, which is often tacit in the formulations of second law of thermodynamics, that the entropy production can be zero only in the points of $B$ which are in equilibrium. In particular, we say that the point $P_0 \in B$ at the instant $t_0$  is in thermodynamic equilibrium if, and only if, $W_0= E_0$.

\medskip

\begin{remark}  \label{r5}
We note that the local formulation of second law of thermodynamics prohibits that over-ideal vectors are in $W_0$ but does not prevents they are in $H_0$. If $H_0$ contains over-ideal vectors or not is just the focus of the present investigation.
\end{remark} 

 \begin{definition} \label{d9}
 Let $B$ a body undergoing an arbitrary  thermodynamic process $p$ of initial instant $t_0$ and duration $\tau_0+\tau - t_0$, and let $\gamma$ the curve representative of the process in  $\mathcal{C}$.
The process $p$ is said:
 \medskip
 \begin{itemize} 
\item irreversible, if there exists at least a point  $z_{\alpha}(P,\,t)$ of $\gamma$ in which the vector of the higher derivatives $y_{\alpha}(P,\,t)$ is real;
\smallskip
\item reversible,  if in any point $z_{\alpha}(P,\,t)$ of $\gamma$  the vector of the higher derivatives $y_{\alpha}(P,\,t)$ is ideal;
\smallskip
\item over-reversible, if there exists at least a point $z_{\alpha}(P,\,t)$ of $\gamma$ in which the vector of the higher derivatives $y_{\alpha}(P,\,t)$ is over-ideal.
\end{itemize}
\end{definition} 

The definitions above allow to enunciate the following

\begin{postulate} \label{p2}  
{\bf Global formulation of second law of thermodynamics}:
Over-reversible processes do not occur in nature. Moreover, a thermodynamic process is reversible if, and only if, its representative curve $\gamma$  lies  into the equilibrium bundle  $\mathcal{E}$.
\end{postulate}
 
 The previous formulations (local and global)  of second law of thermodynamics include the information, not present in the classical ones, that
 the reversible transformations are necessarily quasi-static and hence, they need an infinite time to occur. So, they represent ideal processes, which in nature are approximated by very slow transformations. Here we take into account such a situation by admitting that in any point of a reversible curve the vector of the higher derivatives is ideal.

\section{The Muschik and Ehrentraut theorem revisited} \label{4}

\medskip

In this section we present a novel formulation of the Muschik and Ehrentraut theorem proved in Ref. \cite{MusEhr}. To this end, we use the thermodynamic framework and the generalized formulations of second law established above.

\begin{theorem} \label{t1}
Let $B$ a body, and let the couple $(P_0,\,t_0)$ represent an arbitrary point of $B$ at an arbitrary instant $t_0 \in [\tau_0,\, \tau_0+\tau]$.  Then,  $H_0$ = $W_0$.
 \end{theorem}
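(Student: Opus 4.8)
The plan is to argue by contradiction, exploiting the fact that at the fixed couple $(P_0,t_0)$ all the coefficients $A_{\beta\alpha}(\Sigma_0)$, $B_\alpha(\Sigma_0)$, $C_\beta(\Sigma_0)$ and $D(\Sigma_0)$ are constants, so that the balance relations \eqref{29} cut out an affine subspace of $H_0$ and the entropy production $\sigma(y_\alpha):=B_\alpha(\Sigma_0)y_\alpha-D(\Sigma_0)$ is an affine, hence continuous, scalar function on $H_0$. In this language the real, ideal and over-ideal vectors of Definition \ref{d8} are precisely the vectors with $\sigma>0$, $\sigma=0$ and $\sigma<0$; thus the ideal vectors form the level hyperplane $\{\sigma=0\}$, and $W_0$ is the balance-compatible part of the closed half-space $\{\sigma\ge 0\}$. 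Since $A_{\beta\alpha}$ is assumed to have maximal rank $\omega$, the balance-compatible set is an affine space whose direction space $\ker A_{\beta\alpha}$ has dimension $9\omega$, while the equilibrium subspace $E_0$ (Definition \ref{d7}), collecting the vectors with vanishing time derivatives $z_{\alpha,t}=z_{\alpha,jt}=0$, has dimension $6\omega$.

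First I would suppose $W_0\neq H_0$ and extract a balance-compatible over-ideal vector $\hat y_\alpha$, i.e. one with $A_{\beta\alpha}\hat y_\alpha=C_\beta$ and $\sigma(\hat y_\alpha)<0$: indeed the only way for a vector of $H_0$ solving \eqref{29} to be excluded from $W_0$ is to violate the inequality \eqref{30}. The core of the argument is then to deform $\hat y_\alpha$ inside the balance-compatible affine space along the \emph{free directions}, namely the $3\omega$ components of $y_\alpha$ that, as established just before the statement, are left undetermined both by \eqref{29} and by the differentiated initial data \eqref{31}. Along a segment joining $\hat y_\alpha$ to a vector on which $\sigma\ge 0$, the affine function $\sigma$ changes sign, so by the intermediate value theorem it must vanish at some interior vector $\bar y_\alpha$; this $\bar y_\alpha$ is therefore ideal and still balance-compatible, whence it belongs to $W_0$.

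The decisive step is to choose the deforming direction so that the zero-crossing vector $\bar y_\alpha$ carries nonzero time derivatives, i.e. $\bar y_\alpha\notin E_0$. This is possible exactly because the undetermined components contain directions transverse to $E_0$: moving along them keeps the vector in the solution set of \eqref{29} while producing a genuinely non-equilibrium ideal vector. Such a vector contradicts Postulate \ref{p1}, which states that a vector of $W_0$ is ideal if and only if it lies in $E_0$. The contradiction forces the initial assumption to fail, so that $H_0=W_0$. Equivalently, the same geometry shows that $\sigma$ cannot genuinely depend on the free directions, for otherwise it would be unbounded below on the (unbounded) balance-compatible affine space and would simultaneously create over-ideal vectors and non-equilibrium ideal vectors; hence $B_\alpha$ must annihilate $\ker A_{\beta\alpha}$, which is the linear-algebra content concealed behind the statement.

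I expect the main obstacle to be precisely this transversality and dimension bookkeeping in the deformation step: one must be certain that the free directions really do stick out of the equilibrium subspace $E_0$ and that $\sigma$ is non-constant along at least one of them, since otherwise ($\sigma$ constant on the balance-compatible space) the conclusion has to be drawn instead from the equilibrium normalization $\sigma|_{E_0}=0$ together with the sign information supplied by Postulate \ref{p1}. Treating these two cases uniformly, the non-constant case producing a forbidden ideal non-equilibrium vector and the constant case being pinned down by its equilibrium value, is the delicate point that the geometric reformulation is designed to render transparent.
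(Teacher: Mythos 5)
Your main line of argument coincides with the paper's own proof: both exclude a balance-compatible over-ideal vector by forming a convex combination with a vector of nonnegative entropy production, locating the zero of the affine function $\sigma(y_\alpha)=B_\alpha(\Sigma_0)y_\alpha-D(\Sigma_0)$ along the segment (the paper does this with the explicit choice \eqref{36} of $\lambda$, which is exactly your intermediate-value crossing), noting that the crossing point still solves \eqref{29} and hence lies in $W_0$, and then contradicting the clause of Postulate \ref{p1} that an ideal vector of $W_0$ must lie in $E_0$. Your dimension bookkeeping ($\ker A_{\beta\alpha}$ of dimension $9\omega$, $E_0$ of dimension $6\omega$, the $3\omega$ components left free by \eqref{29} and \eqref{31} being time-derivative components and hence transverse to $E_0$) is a legitimate sharpening of the paper's remark, and your closing observation that $B_\alpha$ must annihilate $\ker A_{\beta\alpha}$ is the Liu-type constitutive content the paper defers to Corollary \ref{c2}.

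There is, however, a genuine gap in your treatment of the degenerate case, which the intermediate-value step silently presupposes away: to join $\hat y_\alpha$ to \vvv{a vector on which $\sigma\ge 0$} you must first know that the balance-compatible affine space contains such a vector. If $\sigma$ is constant and negative on that space (equivalently, $B_\alpha$ annihilates $\ker A_{\beta\alpha}$ while the common value of $B_\alpha y_\alpha - D$ is negative), then every solution of \eqref{29} is over-ideal and $W_0=\emptyset$. Your proposed fallback, the \vvv{equilibrium normalization $\sigma|_{E_0}=0$}, does not exist as a definitional fact: on $E_0$ the function $\sigma$ involves the purely spatial entries of $B_\alpha$ and the term $D$ and need not vanish identically; moreover $E_0$ need not even intersect the solution set of \eqref{29}; and the vanishing of the entropy production at equilibrium is asserted by Postulate \ref{p1} only for vectors of $W_0$, so when $W_0$ is empty the postulate is vacuous and pins down nothing. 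The paper closes this case by a different, physical step: if all vectors of $H_0$ were over-ideal, $W_0$ would be empty, no process could originate at $(P_0,\,t_0)$, and since $(P_0,\,t_0)$ is arbitrary no thermodynamic transformation could occur in $B$ at all, contradicting the hypothesis that $B$ undergoes a process of duration $\tau_0+\tau-t_0$. You need to import that existence-of-processes argument (or an equivalent axiom) to dispose of the constant-negative case; with that repair, the rest of your proposal goes through and matches the paper's reasoning.
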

 \begin{proof}
To prove the theorem it is enough to demonstrate that the vectors of $H_0$ are all and only the vectors of $W_0$.
To this end, we observe that, in the generic point $(P_0,\,t_0)$, at fixed values of  $A_{\beta\alpha}({\Sigma_0})$,$C_\beta({\Sigma_0})$,   $B_{\alpha}({\Sigma_0})$, and $D({\Sigma_0})$, correspond infinite vectors $y_{\alpha}(P_0,\,t_0)$, because only $\omega$ components of $y_{\alpha}(P_0,\,t_0)$ are determined by the balance equations while the remaining $9\omega$ are completely arbitrary (see discussion in Sec. \ref{3}). Moreover, if all the $y_{\alpha}$ in $H_0$ would be over-ideal,   the vector space $W_0$ would be empty, because the second law of thermodynamics prohibits that it contains over-ideal vectors. As a consequence, in $(P_0,\,t_0)$ no any process would be possible. On the other hand, since $(P_0,\,t_0)$ is arbitrary, no any thermodynamic transformation could occur in $B$ in the interval of time $[\tau_0,\,\tau_0+\tau]$. So, in $(P_0,\,t_0)$ the space $H_0$ contains, in principle, both real/ideal vectors and over-ideal ones. 
 
Let's suppose that in $(P_0,\,t_0)$ the space $H_0$ contains an ideal vector $y^1_{\alpha}$  and an over-ideal vector $y^2_{\alpha}$. Since the existence of $y^1_{\alpha}$ is possible if, and only if, $(P_0,\,t_0)$ is in thermodynamic equilibrium, while $y^2_{\alpha}$ exist if, and only if $(P_0,\,t_0)$ is not  in thermodynamic equilibrium, such a situation is impossible to be realized. 

Analogously, let's suppose that $y^1_{\alpha}$ is ideal  and $y^2_{\alpha}$ is real. Again, such a situation is impossible, because it would require $(P_0,\,t_0)$ to be in equilibrium and not in equilibrium.

 Finally, let $y^1_{\alpha}$ be a real vector, and  $y^2_{\alpha}$ a over-ideal one. Such a situation is possible, in principle, provided   
 $(P_0,\,t_0)$ is not in equilibrium.

In such a case, due to the local formulation of second law, neither $y^1_{\alpha}$ nor $y^2_{\alpha}$ are elements of $E_0$.
 
Let's consider now the linear combination $y^3_{\alpha}= \lambda y^1_{\alpha} + (1-\lambda)y^2_{\alpha}$, with $\lambda \in ]0,\,1[$. Since $y^1_{\alpha}$ and $y^2_{\alpha}$ are in $H_0$, they satisfy the following equations

 \begin{equation}
 \label{32}
 A_{\beta\alpha}({\Sigma_0})y^1_{\alpha} = C_\beta({\Sigma_0}),
 \end{equation}
 \begin{equation}
 \label{33}
 A_{\beta\alpha}({\Sigma_0})y^2_{\alpha} = C_\beta({\Sigma_0}).
 \end{equation}
The combination of Eqs. \eqref{32} multiplied by  $\lambda$ and Eqs. \eqref{33} multiplied by $(1-\lambda)$ leads to
 \begin{equation}
 \label{34}
 A_{\beta\alpha}({\Sigma_0})y^3_{\alpha} = C_\beta({\Sigma_0}),
 \end{equation}
namely, $y^3_{\alpha}$ is also a solution of  Eq \eqref{29}, i.e. it is in $H_0$.  On the other hand, the local entropy production corresponding to $y^3_{\alpha}$ can be written as
 \begin{equation} 
  \label{35}
\sigma^3=\lambda\left[B_{\alpha}({\Sigma_0})y^1_{\alpha} - D({\Sigma_0})\right] + (1-\lambda)\left[B_{\alpha}({\Sigma_0})y^2_{\alpha} - D({\Sigma_0})\right]=
 \end{equation}
 \begin{equation} 
 \nonumber
=B_{\alpha}({\Sigma_0})\left[\lambda y^1_{\alpha} + (1-\lambda)y^2_{\alpha}\right ]- D({\Sigma_0}).
 \end{equation}
Since $\lambda$ is arbitrary in $]0,\,1[$, nothing prevents to chose it as
 \begin{equation} 
 \label{36}
\lambda = \frac{D({\Sigma_0})-B_{\alpha}({\Sigma_0})y^2_{\alpha}}{B_{\alpha}({\Sigma_0})\left[y^1_{\alpha}-y^2_{\alpha}\right]},
 \end{equation}
because, as it is easily seen, the right-hand side of Eq. \eqref{36} is in the interval $]0,\,1[$. In fact, being $y^2_{\alpha}$ over-ideal we get $D({\Sigma_0})-B_{\alpha}({\Sigma_0})y^2_{\alpha}>0$. Moreover, being $y^1_{\alpha}$ real, we get $B_{\alpha}({\Sigma_0})\left[y^1_{\alpha} -y^2_{\alpha}\right ]> D({\Sigma_0})-D({\Sigma_0})$, namely, $B_{\alpha}({\Sigma_0})\left[y^1_{\alpha} -y^2_{\alpha}\right ]> 0$. Hence $\lambda>0$. Moreover, being $y^1_{\alpha}$ real, we get also that $B_{\alpha}({\Sigma_0})\left[y^1_{\alpha}-y^2_{\alpha}\right]> D({\Sigma_0})-B_{\alpha}({\Sigma_0})y^2_{\alpha}$, and hence
$\lambda <1$. 

Consequently, the right-hand side of Eq.  \eqref{35} vanishes, so that $y^3_{\alpha}$ is in $E_0$. However,  this is impossible, otherwise  $(P_0,\,t_0)$ would be in thermodynamic equilibrium. Thus, it is forbidden that in  $(P_0,\,t_0)$ there are both real and over-ideal vectors which are solutions of the local balance laws  \eqref{29}.

Furthermore, suppose that both  $y^1_{\alpha}$ and  $y^2_{\alpha}$ are real. Then, it is easy to verify by direct calculation that $\lambda$ can be taken such that $\sigma^3>0$. 

Finally, if $(P_0,\,t_0)$ is a point of equilibrium, then the entropy production related to  $y^1_{\alpha}$ and  $y^2_{\alpha}$ vanishes, so that, by Eq. \eqref{35}, it follows that also $\sigma^3$ is zero.

The considerations above show the impossibility that in a point  $P_0$ of $B$, at a given instant $t_0$, the solutions of Eqs. \eqref{29} can be of different type. Moreover, they cannot be over-ideal only, because this contradicts the local form of second law of thermodynamics. Thus, $H_0$ may contain either only real vectors, and in such a case $(P_0,\,t_0)$ is a point  of non-equilibrium, or only ideal vectors, and in such a case $(P_0,\,t_0)$ is a point of equilibrium. This conclusion proves the theorem.
\end{proof}

\begin{corollary} \label{c1}
 $\mathcal{H}$ = $\mathcal{W}$.
\end{corollary}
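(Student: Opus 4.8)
The plan is to globalize the fibrewise identity established in Theorem \ref{t1}. First I would recall that, by Definitions \ref{d6} and \ref{d7}, the two bundles are the disjoint unions $\mathcal{H} = \bigcup_{t\in[0,\infty]}\{t\}\times H_t$ and $\mathcal{W} = \bigcup_{t\in[0,\infty]}\{t\}\times W_t$ over the common base line $\mathbb{R}$ on which time flows. Consequently, proving $\mathcal{H}=\mathcal{W}$ is equivalent to proving that the fibres coincide at every base point, i.e. $H_t = W_t$ for each instant $t$.

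Next I would invoke Theorem \ref{t1}, whose conclusion $H_0 = W_0$ was obtained at a couple $(P_0,t_0)$ chosen arbitrarily both in space and in time. In the notation of Sec. \ref{3}, $H_0 = H_t(P_0,t_0)$ and $W_0 = W_t(P_0,t_0)$, so the identity $H_t(P_0,t_0)=W_t(P_0,t_0)$ holds for every spatial point $P_0\in B$ and every instant $t_0\in[\tau_0,\tau_0+\tau]$. Since $P_0$ ranges over the whole body (which, by hypothesis, occupies the whole space) and $t_0$ ranges over the whole time interval, the fibres $H_t$ and $W_t$ agree at each base point.

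Finally I would assemble the bundle: taking the disjoint union over $t$ of the fibrewise equalities $H_t=W_t$ yields $\mathcal{H}=\mathcal{W}$ immediately. Under the standing assumption that the fibres do not vary in time this is even more transparent, for both bundles then reduce to the Cartesian products $\mathbb{R}\times H$ and $\mathbb{R}\times W$ with $H=W$ by Theorem \ref{t1}.

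I do not expect a genuine obstacle here, since the corollary is precisely the global counterpart of the local result, in the same way that Postulate \ref{p2} globalizes Postulate \ref{p1}. The only step requiring care is to verify that the arbitrariness of $(P_0,t_0)$ in Theorem \ref{t1} exhausts the index set of the disjoint unions, so that the pointwise equality of fibres does upgrade to equality of the total spaces; no geometric or analytic input beyond Theorem \ref{t1} is needed.
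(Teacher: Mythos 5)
Your proposal is correct and follows essentially the same route as the paper: both globalize Theorem \ref{t1} by the arbitrariness of $(P_0,\,t_0)$, the paper making this exhaustion of the time index explicit via the device that any $t_0$ can be taken as the initial instant of a restricted process of duration $\tau_0+\tau-t_0$ (so the $7\omega$/$3\omega$ counting and hence the theorem's conclusion apply fibrewise), which is exactly the care-requiring step you flagged at the end.
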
 
\begin{proof} This corollary is an immediate consequence of the arbitrariness of the initial instant $t_0$, and of the point $P_0$. In particular, whatever is $t_0$, we can ever consider it as the initial instant of the restricted process of duration  $\tau_0+\tau-t_0$, so that $H_0 \equiv H_t(P_0,t_0)$ has dimension $10\omega$. Moreover, only $7\omega$ of components of the vectors of $H_0$ can be determined by the algebraic relations \eqref{29} and \eqref{31} while the further $3\omega$ components are completely arbitrary. Thus, to $H_0$ can be applied the conclusions established  in Theorem  \ref{1}. This is enough to prove that, for any $t \in [\tau_0,\tau_0+\tau]$ the space of the higher derivatives $H_t$ contains only real or ideal vectors.
\end{proof}

\begin{remark}  \label{r6}
The Corollary  \ref{c1} also implies  $\mathcal{\hat E}$ = $\mathcal{E}$.
\end{remark} 

\begin{corollary} \label{c2}
 The unilateral differential constraint  \eqref{27} is a restriction on the constitutive quantities $U_{\beta}$, $r_{\beta}$, $s$ and $J_k$ and not on the thermodynamic processes $p$.
 \end{corollary}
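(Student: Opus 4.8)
The plan is to interpret the equality $H_0 = W_0$ of Theorem \ref{t1}, globalized to $\mathcal{H} = \mathcal{W}$ by Corollary \ref{c1}, in the light of the alternative posed in Remark \ref{r3} and in the Introduction. First I would recall that, through the Cauchy problem \eqref{24}, \eqref{28}, a thermodynamic process $p$ is represented by a curve whose higher-derivative vector $y_{\alpha}$ solves the balance equations \eqref{29} at each point $(P_0,t_0)$; conversely, every such vector is a candidate datum for a physically conceivable process. The entropy inequality \eqref{27} would then act as a \emph{restriction on the processes} precisely in the case contemplated in Remark \ref{r3}, namely $W_0 \subsetneq H_0$: in that event some solutions of \eqref{29} would fail \eqref{30}, and \eqref{27} would genuinely discard a nonempty class of otherwise admissible processes, selecting only the survivors.

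Next I would use Theorem \ref{t1} to rule this out. Since $H_0 = W_0$, every $y_{\alpha}$ compatible with the balance laws also satisfies the entropy inequality, so that no balance-law solution is discarded by \eqref{30}. Hence \eqref{27} does not discriminate among the solutions of \eqref{29} and cannot select a proper subclass of curves, which establishes that it is not a restriction on the thermodynamic processes $p$.

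It remains to argue the positive half, that \eqref{27} is nonetheless a restriction on the constitutive quantities. Here I would stress that the coincidence $H_0 = W_0$ is not an identity valid for arbitrary material response: the functional $B_{\alpha}(\Sigma_0)$ and the scalar $D(\Sigma_0)$ are assembled from the specific entropy $s$ and the entropy flux $J_k$ via \eqref{25}, \eqref{26}, whereas the solution set of \eqref{29} on which \eqref{30} is now required to hold identically is fixed by $A_{\beta\alpha}(\Sigma_0)$ and $C_\beta(\Sigma_0)$, hence by $U_{\beta}$, $\Phi^\beta_k$ and $r_\beta$ via \eqref{22}, \eqref{23}. Demanding that \eqref{30} be satisfied by \emph{every} vector solving \eqref{29} is therefore a nontrivial compatibility condition tying $s$ and $J_k$ to the balance-law data; this is exactly a constraint carried by the constitutive quantities $U_{\beta}$, $r_{\beta}$, $s$ and $J_k$, and it is what the Coleman--Noll procedure recalled in Sec. \ref{1} exploits. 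Collecting the two halves proves the corollary.

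The step I expect to be the genuine obstacle is this positive half: one must show that $H_0 = W_0$ is a substantive condition on the constitutive functions and not a vacuous restatement of the theorem. The cleanest route is to observe, as in Remark \ref{r3} before the amendment is imposed, that for generic choices of $s$ and $J_k$ the proper inclusion $W_0 \subsetneq H_0$ would hold; forcing equality therefore removes degrees of freedom from the admissible material response, which is precisely the sense in which \eqref{27} restricts $U_{\beta}$, $r_{\beta}$, $s$ and $J_k$ rather than the processes.
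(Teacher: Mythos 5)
Your proposal is correct and follows essentially the same route as the paper: the negative half invokes Corollary \ref{c1} (there, phrased via the irreversible/reversible/over-reversible trichotomy of Definition \ref{d9}, so that no solution of \eqref{24} can be over-reversible and hence none is discarded by \eqref{27}), and the positive half matches the paper's observation that this property fails for arbitrary $A_{\beta\alpha}$, $C_\beta$, $s$ and $J_k$, so the constraint's role is precisely to select the admissible constitutive forms. Your vector-level phrasing via $H_0 = W_0$ versus the paper's process-level phrasing is only a cosmetic difference.
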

\begin{proof}
 In fact, any process $p: t\in [\tau_0,\,\tau_0+\tau] \subseteq\mathbb{R}\rightarrow z_\alpha(x_{j},\,t)  \in\mathcal{C}$, where $z_\alpha(x_{j},\,t)$ is a solution of the balance laws  \eqref{24}, can only be either irreversible or reversible but not over-reversible, because otherwise its representative curve $\gamma$ would contain at least a over-ideal point, against Corollary \ref{c1}. On the other hand, such a property of the solutions of the system of balance laws is not guaranteed whatever are $A_{\beta\alpha}$ and $C_\beta$, and for arbitrary $s$ and $J_k$ because, given the state space,  only particular forms of those functions defined on it lead to a nonnegative entropy production. Then, the role of the unilateral differential constraint in Eq. \eqref{27} is just to select such forms.
\end{proof}
\section{Discussion} \label{5}
Exploitation of second law of thermodynamics is based on the assumption that it restricts the
constitutive equations and not the thermodynamic processes. Then, the constitutive equations must be assigned in such a way that
 all solutions of the field equations satisfy the entropy inequality.
An alternative interpretation of the restrictions imposed by second law  is that we must exclude from the set of solutions
of the balance equations that ones which  do not guarantee a nonnegative entropy production. The
problem of choosing among the two interpretations above has been solved in 1996 by Muschik and Ehrentraut \cite{MusEhr},
by postulating an amendment to the second law which assumes that at a fixed instant of time and in any point of the body, the entropy production is zero if, and only if, this point is in thermodynamic equilibrium. Muschik and Ehrentraut  proved that, presupposing the amendment,
necessarily second law of thermodynamics restricts the constitutive equations and not the processes. Such
a result justifies, from the theoretical point of view, the approach to the exploitation of second law proposed in
1963 by Coleman and Noll in their celebrated paper \cite{ColNol}.

In the present paper we have revisited their proof, lighting up some geometric aspects which were hidden in Ref. \cite{MusEhr}. Moreover, we proposed a generalized formulation of second law of thermodynamics which incorporates the amendment. 

In future researches we aim at extending the present results to more complex situations.

In the case of shock wave propagation, among the solutions of the Rankine-Hugoniot equations, the physical
shocks are selected by the celebrated Lax conditions, which force the shock speed $U_s$ to satisfy the
inequality $U_{b}
> U_{s} > U_{a}$ with $U_{b}$ as the characteristic speed behind the shock and $U_{a}$ as the
characteristic speed ahead the shock  \cite{DAF}. Since for a fluid the Lax conditions imply the growth of the
entropy across the shock, they are often called in the literature ''entropy growth conditions'' .
The common interpretation of this result is that for non-regular (weak) solutions of the balance equations,
second law of thermodynamics restricts the processes instead of the constitutive equations. However, in Ref. \cite{TriCim} it is proved that the amendment can be generalized in order to prove that second law of thermodynamics necessarily
restricts the constitutive equations on both sides of the shock. So, under the hypothesis above, the classical
interpretation of Lax conditions should be revisited in the light of the new mathematical framework formulated in the present paper.

In Ref. \cite{TriCim1}, the results in Refs. \cite{MusEhr,TriCim} on the interpretation of the second law of thermodynamics have been extended in order to encompass the most general situation in which also the gradients of the basic laws are considered
as constraints for the entropy inequality \cite{CimSelTri}. This  result too should be reanalyzed within the mathematical framework presented here. 

To our opinion, the investigations mentioned above are necessary, since thermodynamic processes that involve discontinuous
solutions are very frequent in physics.

\section*{Acknowledgements}

P.~R. thanks the University of Messina and the Italian National Group of Mathematical Physics (GNFM-INdAM) for financial support.

V.~A.~C. thanks the University of Basilicata nd the Italian National Group of Mathematical Physics (GNFM-INdAM)  for financial support.

\bibliographystyle{10}

\end{document}